\tikzstyle{box}=[shape=rectangle, text height=1.5ex, text depth=0.25ex, yshift=0.5mm, fill=white, draw=black, minimum height=5mm, yshift=-0.5mm, minimum width=5mm, font={\small}]
\tikzstyle{gate}=[shape=rectangle, text height=1.5ex, text depth=0.25ex, yshift=0.5mm, fill=white, draw=black, minimum height=5mm, yshift=-0.5mm, minimum width=5mm, font={\small}, tikzit category=circuit]
\tikzstyle{big gate}=[shape=rectangle, text height=1.5ex, text depth=0.25ex, yshift=0.5mm, fill=white, draw=black, minimum height=10mm, yshift=-0.5mm, minimum width=5mm, font={\small}, tikzit category=circuit]
\tikzstyle{Z dot}=[inner sep=0mm, minimum size=2mm, shape=circle, draw=black, fill={rgb,255: red,221; green,255; blue,221}, tikzit category=zx]
\tikzstyle{Z phase dot}=[minimum size=5mm, font={\footnotesize\boldmath}, shape=rectangle, rounded corners=2mm, inner sep=0.2mm, outer sep=-2mm, scale=0.8, tikzit shape=circle, draw=black, fill={rgb,255: red,221; green,255; blue,221}, tikzit draw=blue, tikzit category=zx]
\tikzstyle{X dot}=[Z dot, shape=circle, draw=black, fill={rgb,255: red,255; green,136; blue,136}, tikzit category=zx]
\tikzstyle{X phase dot}=[Z phase dot, tikzit shape=circle, tikzit draw=blue, fill={rgb,255: red,255; green,136; blue,136}, font={\footnotesize\boldmath}, tikzit category=zx]
\tikzstyle{hadamard}=[fill=yellow, draw=black, shape=rectangle, inner sep=0.6mm, minimum height=1.5mm, minimum width=1.5mm, tikzit category=zx]
\tikzstyle{paulibox}=[fill={rgb,255: red,221; green,221; blue,255}, draw=black, shape=rectangle, inner sep=0.6mm, minimum height=5mm, minimum width=5mm, font={\footnotesize}, text height=1.5ex, text depth=0.25ex, tikzit category=zx]
\tikzstyle{vertex}=[inner sep=0mm, minimum size=1mm, shape=circle, draw=black, fill=black, tikzit category=misc]
\tikzstyle{vertex set}=[inner sep=0mm, minimum size=1mm, shape=circle, draw=black, fill=white, font={\footnotesize\boldmath}, tikzit category=misc]
\tikzstyle{small black dot}=[fill=black, draw=black, shape=circle, inner sep=0pt, minimum width=1.2mm, tikzit category=circuit]
\tikzstyle{cnot ctrl}=[fill=black, draw=black, shape=circle, inner sep=0pt, minimum width=1.2mm, tikzit category=circuit]
\tikzstyle{cnot targ}=[fill=white, draw=white, shape=circle, tikzit category=circuit, label={center:$\oplus$}, inner sep=0pt, minimum width=2.1mm, tikzit fill={rgb,255: red,102; green,204; blue,255}, tikzit draw=black]
\tikzstyle{ket}=[fill=white, draw=black, shape=regular polygon, regular polygon sides=3, regular polygon rotate=-30, scale=0.7, inner sep=1pt, tikzit category=circuit, tikzit shape=rectangle, tikzit fill=green]
\tikzstyle{bra}=[fill=white, draw=black, shape=regular polygon, regular polygon sides=3, regular polygon rotate=30, scale=0.7, inner sep=1pt, tikzit category=circuit, tikzit shape=rectangle, tikzit fill=red]
\tikzstyle{scalar}=[shape=rectangle, text height=1.5ex, text depth=0.25ex, yshift=0.5mm, fill=white, draw=black, minimum height=5mm, yshift=-0.5mm, minimum width=5mm, font={\small}]
\tikzstyle{clabel}=[fill=white, draw=none, shape=rectangle, tikzit fill={rgb,255: red,56; green,255; blue,242}, font={\footnotesize}, inner sep=1pt, tikzit category=labels]
\tikzstyle{empty diagram}=[draw={gray!40!white}, dashed, shape=rectangle, minimum width=1cm, minimum height=1cm, tikzit category=misc]
\tikzstyle{simple}=[-]
\tikzstyle{hadamard edge}=[-, dashed, dash pattern=on 2pt off 0.5pt, thick, draw={rgb,255: red,68; green,136; blue,255}]
\tikzstyle{box edge}=[-, dashed, dash pattern=on 2pt off 0.5pt, thick, draw={rgb,255: red,203; green,192; blue,225}]
\tikzstyle{brace edge}=[-, tikzit draw=blue, decorate, decoration={brace,amplitude=1mm,raise=-1mm}]
\tikzstyle{diredge}=[->]
\tikzstyle{double edge}=[-, double, shorten <=-1mm, shorten >=-1mm, double distance=2pt]
\tikzstyle{gray edge}=[-, {gray!60!white}]
\tikzstyle{pointer edge}=[->, very thick, gray]
\tikzstyle{boldedge}=[-, line width=1.6pt, shorten <=-0.17mm, shorten >=-0.17mm]
\newcommand\etc{etc\@ifnextchar.{}{.\@}\xspace}
\renewcommand{\P}{\textbf{\upshape P}\xspace}
\newcommand{\NP}{\textbf{\upshape NP}\xspace}
\newcommand{\sharpP}{\textbf{\upshape\texttt\#P}\xspace}
\newcommand{\PostBQP}{\textbf{\upshape PostBQP}\xspace}
\newcommand{\BQP}{\textbf{\upshape BQP}\xspace}
\newcommand{\PP}{\textbf{\upshape PP}\xspace}
\newcommand{\PH}{\textbf{\upshape PH}\xspace}
\newcommand{\CE}{\textbf{\upshape CircuitExtraction}\xspace}
\newcommand{\ACE}{\textbf{\upshape ApproxCircuitExtraction}\xspace}
\newcommand{\numberSAT}{\textbf{\upshape \texttt\#SAT}\xspace}
\title{Circuit Extraction for ZX-diagrams can be \sharpP-hard}
\author{Niel de Beaudrap}{University of Sussex, United Kingdom}{niel.debeaudrap@gmail.com}{https://orcid.org/
0000-0001-9549-5146}{}
\author{Aleks Kissinger}{University of Oxford, United Kingdom}{aleks.kissinger@cs.ox.ac.uk}{https://orcid.org/0000-0002-6090-9684}{Acknowledges support from AFOSR grant FA2386-18-1-4028.}
\author{John van de Wetering}{Radboud University Nijmegen, The Netherlands \and University of Oxford, United Kingdom \and \url{https://vdwetering.name}}{john@vdwetering.name}{https://orcid.org/0000-0002-5405-8959}{Acknowledges support from a NWO Rubicon Personal Grant.}
\authorrunning{N.~de Beaudrap, A.~Kissinger \& J.~van de Wetering} 
\keywords{ZX-calculus, circuit extraction, quantum circuits, \sharpP} 
\begin{document}
\nolinenumbers
\maketitle

\begin{abstract}
	The ZX-calculus is a graphical language for reasoning about quantum computation using ZX-diagrams, a certain flexible generalisation of quantum circuits that can be used to represent linear maps from $m$ to $n$ qubits for any $m,n \geq 0$. Some applications for the ZX-calculus, such as quantum circuit optimisation and synthesis, rely on being able to efficiently translate a ZX-diagram back into a quantum circuit of comparable size. While several sufficient conditions are known for describing families of ZX-diagrams that can be efficiently transformed back into circuits, it has previously been conjectured that the general problem of \emph{circuit extraction} is hard. That is, that it should not be possible to efficiently convert an arbitrary ZX-diagram describing a unitary linear map into an equivalent quantum circuit.
  In this paper we prove this conjecture by showing that the circuit extraction problem is \sharpP-hard, and so is itself at least as hard as strong simulation of quantum circuits.
  In addition to our main hardness result, which relies specifically on the circuit representation, we give a representation-agnostic hardness result. Namely, we show that any oracle that takes as input a ZX-diagram description of a unitary and produces samples of the output of the associated quantum computation enables efficient probabilistic solutions to NP-complete problems.
\end{abstract}

\section{Introduction}

Quantum circuit notation is widely used in the field of quantum computing to denote computations to be executed on a quantum computer. 
While quantum circuits are a useful tool for representing computations on a quantum computer, they are somewhat inconvenient for reasoning about computations (such as proving equalities or doing simplifications); and for representing computations in alternative models like the one-way model of measurement-based quantum computation (MBQC)~\cite{MBQC1}, or surface code lattice surgery~\cite{Horsman2012surgery}.

ZX-diagrams are an alternative, more general representation of quantum computations, which allow complex operations to be described using a few simple generating operators.
ZX-diagrams come with an equational theory, called the \emph{ZX-calculus}~\cite{CD1}, which allows one to perform many useful calculations graphically, without resorting to concrete matrix computations.
While ZX-diagrams can be seen as an extension of circuits~\cite{CD2}, they also readily admit encodings of the one-way model~\cite{DP2} and lattice surgery~\cite{horsman2017surgery}, and allow one to reason more easily about such procedures.
There are several known \emph{complete} axiomatisations of the ZX-calculus~\cite{HarnyCompleteness,vilmarteulercompleteness}, where any true equality of linear maps can be proved graphically.
For a review on the ZX-calculus we refer to~\cite{vandewetering2020zxcalculus}.

The ZX-calculus has been used in a variety of areas. It was used to optimise T-count~\cite{kissinger2019tcount,de2019techniques_old}, braided circuits~\cite{hanks2019effective} and MBQC~\cite{Backens2020extraction}; to find a new normal form for Clifford circuits~\cite{cliffsimp}; to do more effective classical simulation using stabiliser decompositions~\cite{kissinger2021simulating}; and to reason about surface codes~\cite{Gidney2019efficientmagicstate,autoCCZ}, mixed-state quantum computations~\cite{carette_completeness_2019}, natural language processing~\cite{coecke2020foundations}, condensed matter systems~\cite{east2020akltstates}, counting problems~\cite{debeaudrap2020tensor,townsend-teague2021classifying} and spin-networks~\cite{d.p.east2021spinnetworks}.

As a strict extension of quantum circuit language, ZX-diagrams may express operations in a form that do not correspond directly to a quantum circuit.
This added flexibility makes it easier to find novel strategies to simplify quantum circuits, but it comes at a cost: given a ZX-diagram representing a unitary linear map, it might be non-trivial to transform it back into a circuit of comparable size. Such a translation might however be necessary if, for instance, we want to run the computation described by a ZX-diagram on a gate-based quantum computer.

We refer to the above problem, as the \emph{circuit extraction} problem: given a ZX-diagram which denotes a unitary operator $U$, find a unitary circuit (\emph{i.e.},~a~quantum circuit without measurements) that implements $U$.
In recent years, some progress has been made on this problem~\cite{cliffsimp,kissinger2019tcount,Backens2020extraction,Simmons2021Measurement,KissingerCNOT2019,deBeaudrap2020Paulifusion}. However, all known methods for efficient extraction of circuits from ZX-diagrams rely on additional conditions, in particular requiring there to be some kind of \emph{flow} on the diagram, a concept imported from MBQC~\cite{GFlow}.
Such conditions allow the diagram to be rewritten incrementally into a unitary circuit. Since many ZX-calculus rewrites preserve these conditions, it is possible to perform optimisation of quantum circuits using ZX-calculus rules and still recover circuits efficiently.

However, it is worth trying to generalise these conditions as much as possible, or even remove them. For instance, it was noted in~\cite{kissinger2019tcount} that a certain transformation of ZX-diagrams would decrease the T-count (an important metric for quantum circuit optimisation), but in the process broke the invariant (the existence of a gflow), preventing a circuit from being extracted efficiently using known techniques. Given all this it is then natural to wonder about the following question:

\begin{center}
	\itshape
	Is there some efficient procedure to translate any \\ unitary ZX-diagram into a quantum circuit?
\end{center}

In this paper we present strong evidence that there is no such efficient procedure, by showing that the circuit extraction problem is \sharpP-hard in the worst case.
The complexity class \sharpP contains for instance the problem of strong simulation of quantum circuits, and counting the number of satisfying solutions to a Boolean formula, so \sharpP-hard problems are expected to be intractable. 
We prove \sharpP-hardness by giving an encoding of Boolean formulae into unitary ZX-diagrams in such a way that extracting a polysize circuit provides a solution to the associated \numberSAT instance.
A consequence of our result is that if there were a polynomial time algorithm for circuit extraction, then $\P = \NP$.

Alternatively, since there is an evident translation from a ZX-diagram into a quantum circuit with postselection, this result can equivalently be seen as expressing the hardness of translating a postselected circuit that is promised to be proportional to a unitary into a circuit without postselection. While intuitively this seems likely to be hard, particularly in light of Aaronson's landmark result that $\PostBQP = \PP$~\cite{aaronson2005quantum}, our hardness result seems to be quite different in nature due to the unitarity promise. In particular, the postselection does not seem to be the `source of power' in our proof: the measurement outcomes corresponding to the post-selections in our circuits occur with some bounded probability, independent of the problem size.

One could ask how much our hardness result is tied to the fact that we require a procedure that produces quantum circuits from ZX-diagrams. Especially, when considering that in most cases we are not interested in the circuit itself, but instead we simply want to sample the output of the quantum computation. Perhaps one could find some other procedure to ``program'' a quantum computer using a ZX-diagram describing a unitary and obtain samples of measurement outcomes. We show that an efficient such procedure is unlikely to exist for arbitrary ZX-diagrams, by finding that such a procedure allows you to probabilistically solve \NP-hard problems. So if there were some way to generically translate unitary ZX-diagrams into procedures which could be realised in polynomial time on a quantum computer, it would follow that the entire polynomial hierarchy is in \BQP, and in particular that $\NP \subseteq \BQP$.

The paper is structured as follows. We start by covering preliminaries on quantum circuits, ZX-diagrams and the necessary complexity theory in Section~\ref{sec:prelim}.
Then in Section~\ref{sec:circuitExtraction} we formally define the circuit extraction problem and prove it is hard. Section~\ref{sec:variations} considers several variations on circuit extraction, and in Section~\ref{sec:upperbound} we find some upper bounds on the hardness of circuit extraction. We end with some concluding remarks in Section~\ref{sec:conclusion}.

\section{Preliminaries}\label{sec:prelim}

\subsection{Quantum circuits}
\label{sec:circuits}

Since we wish to extract `a circuit' from a \zxdiagram, it will be helpful to first consider what we actually mean by a circuit.

In quantum computational theory, a `circuit' is a description of a computational process consisting of operations which may be decomposed as a sequence of primitive `gates' and `measurements', which act on one or more qubits to change the states of those qubits.
The state-space of a qubit is identified with the finite-dimensional Hilbert space $\mathcal H_2 \cong \mathbb C^2$; the state of $k$ qubits in parallel is described by the tensor product $\mathcal H_2^{\otimes k}$.
A `gate' is an operation which is applied to one or more qubits and implements a unitary transformation $U: \mathcal H_2^{\otimes k} \to \mathcal H_2^{\otimes k}$ on the associated state space. 
A `measurement' is an operation which transforms a state $\ket{\psi} \in \mathcal H_2^{\otimes k}$ to some state $p_j^{-1/2} \,\Pi_j \ket{\psi}$ where $\{ \Pi_1, \Pi_2, \ldots \}$ is a set of projections that sum up to the identity operator $I$, the $p_j$ gives the probability of observing that particular measurement outcome and is given by $p_j = \bra{\psi} \Pi_j \ket{\psi}$, and the index $j$ provides the classical `outcome' indicating which transformation occurred.
A gate or measurement acting on a small number of qubits can be applied to a larger set of qubits by taking the tensor product with an appropriate number of identity operators.
A~`circuit' is then a composition of such gates and measurements on some number of qubits, acting in sequence or in parallel, to describe more complex (and in general, non-deterministic and irreversible) transformations of a quantum state-space.
For the purposes of using quantum circuits to define a reasonable model of computation, one usually elaborates the above with a description of how one would specify a circuit as part of a family of unitary operators, acting on inputs of various sizes (see Appendix~\ref{apx:circuits} for details).
For our purposes, it will suffice to require that the coefficients of the gates be efficiently computable, and in particular provided explicitly in some representation which suffices to approximate them to $O(\text{poly}(n))$ bits of precision in time $O(\mathrm{poly}(n))$ for an $n$ qubit circuit.

It will be convenient to refer to one specific such gate-set --- an infinite set $\mathcal B$ of gates, consisting of the single-qubit gates $Z_\alpha$ for arbitrary angles $\alpha$, the single-qubit Hadamard gate $H$ and the two-qubit gate $\CNOT$:
\begin{equation}
  Z_\alpha \ = \ \begin{pmatrix}
  1 & 0 \\ 0 & \mathrm e^{i\alpha}
  \end{pmatrix}
  \qquad
  H \ = \ \frac{1}{\sqrt{2}}\begin{pmatrix}
    1 & 1 \\ 1 & -1
  \end{pmatrix}
  \qquad
  \CNOT
  \ = \
  \text{\footnotesize$
  \begin{pmatrix}
    \,1 & 0 & 0 & 0\; \\
    \,0 & 1 & 0 & 0\; \\
    \,0 & 0 & 0 & 1\; \\
    \,0 & 0 & 1 & 0\; 
  \end{pmatrix}
  $}.
\end{equation}
This gate set forms a \emph{universal} gate set, meaning that a unitary acting on any number of qubits can be written as a circuit consisting of these gates~\cite{NielsenChuang}.
Other universal gate-sets exist, but so long as one considers gate sets whose parameters are efficiently computable from some input parameters and which act only on a bounded numbers of qubits (\emph{e.g.},~at most two or three qubits), the size of a circuit to represent a given unitary operator can only vary by a constant factor, so that for the purpose of complexity theory, the details of the specific gate set chosen are not important.

A circuit which contains no measurements, and therefore consists entirely of unitary gates, is called a `unitary circuit'.
A unitary circuit is reversible, and `deterministic' in the sense that an idealised realisation of such a circuit will transform the state-space in the same way each time.
As this is a convenient feature for the design and analysis of quantum algorithms, much of the literature on quantum algorithms concerns itself with unitary circuits, and much of the design of quantum computers is concerned with how to reliably implement unitary circuits.

\subsection{ZX-diagrams}

We provide a brief overview of ZX-diagrams.
For a review see~\cite{vandewetering2020zxcalculus}, and for a book-length introduction see Ref.~\cite{CKbook}.

ZX-diagrams form a diagrammatic language similar to the familiar
quantum circuit notation.  A \emph{\zxdiagram} (or simply
\emph{diagram}) consists of \emph{wires} and \emph{spiders}.  Wires
entering the diagram from the left are \emph{inputs}; wires exiting to
the right are \emph{outputs}.  Given two diagrams we can compose them
by joining the outputs of the first to the inputs of the second, or
form their tensor product by simply stacking the two diagrams~\cite{CD1,CD2}.

\emph{Spiders} are linear operations which can have any number of input or output
wires.  There are two varieties: $Z$-spiders depicted as green dots and $X$-spiders depicted as red dots:
\begin{equation}
\label{eqn:spiders}
\begin{aligned}
\tikzfig{Zsp-a} \ &:= \ \ketbra{\texttt{0}\cdots\texttt{0}}{\texttt{0}\cdots\texttt{0}} \,+\,
\mathrm e^{i \alpha} \,\ketbra{\texttt{1}\cdots\texttt{1}}{\texttt{1}\cdots\texttt{1}}
\\[2ex]
\tikzfig{Xsp-a} \ &:= \ \ketbra{\texttt{+}\cdots\texttt{+}}{\texttt{+}\cdots\texttt{+}} \,+\,
\mathrm e^{i \alpha} \,\ketbra{\texttt{-}\cdots\texttt{-}}{\texttt{-}\cdots\texttt{-}}
\end{aligned}
\end{equation}
Note that if you are reading this document in monochrome or otherwise have difficulty distinguishing green and red, $Z$ spiders will appear lightly-shaded and $X$ darkly-shaded.
The diagram as a whole corresponds to a linear map built from the
spiders (and permutations) by the usual composition and tensor product
of linear maps.  As a special case, diagrams with no inputs represent
(super-normalised) state preparations.
Note that when $\alpha=0$, we will not write the phase on the spider.

\begin{example}\label{ex:basic-maps-and-states}
  We can immediately write down some simple state preparations and
  unitaries in the \zxcalculus:
  \begin{equation}
	\label{eq:basic-maps-and-states}
  \begin{aligned}
  \begin{array}{rcccl}
  \tikzfig{ket-+} & = & \ket{\texttt{0}} + \ket{\texttt{1}}& \ =& \sqrt{2}\, \ket{\texttt{+}} \\[2ex]
  \tikzfig{ket-0} & = & \ket{\texttt{+}} + \ket{\texttt{-}}& \ =& \sqrt{2}\, \ket{\texttt{0}} \\[2ex]
  \tikzfig{Z-a} & = & \ketbra{\texttt{0}}{\texttt{0}} \,+\, \mathrm e^{i \alpha} \,\ketbra{\texttt{1}}{\texttt{1}}&\ =:\, & Z_\alpha \\[2ex]
  \tikzfig{X-a} & = & \ketbra{\texttt{+}}{\texttt{+}} \,+\, \mathrm e^{i \alpha} \,\ketbra{\texttt{-}}{\texttt{-}}&\ =:\, & X_\alpha
  \end{array}  	
  \end{aligned}
  \end{equation}
  We can also represent the effects that are dual to the states above using spiders:
  \begin{equation}
  \begin{aligned}
  \begin{array}{rcccl}
  \tikzfig{bra-+} & = & \bra{\texttt{0}} + \bra{\texttt{1}}& \ =& \sqrt{2}\, \bra{\texttt{+}} \\[2ex]
  \tikzfig{bra-0} & = & \bra{\texttt{+}} + \bra{\texttt{-}}& \ =& \sqrt{2}\, \bra{\texttt{0}} \\[2ex]
  \end{array}   
  \end{aligned}
  \end{equation}
  
\end{example}
In the diagrams above we write explicit scalars to represent a proportionality constant.
In this paper (non-zero) scalar factors will not be important. However, do note it is always possible to represent any scalar as an explicit ZX-diagram (of constant size).
For this reason, our results will also apply to other proposed normalisations of the ZX generators, such as those in Refs.~\cite{debeaudrap2020welltempered,east2020akltstates,d.p.east2021spinnetworks}.

It is often convenient to introduce a symbol --- a yellow square --- for
the Hadamard gate.
This is defined by the equation:
\begin{equation}
\label{eq:Hdef}
\qquad
\tikzfig{had-alt}
\end{equation}
The \CNOT gate also has a straightforward representation as a ZX-diagram:
\begin{equation}
  \CNOT \ =\ \sqrt{2}\ \ \tikzfig{cnot}
\end{equation}
Seeing as can represent $Z_\alpha$, $H$ and $\CNOT$ gates as ZX-diagrams, we see that we can in fact represent any unitary as a ZX-diagram.
The above demonstrates that ZX-diagrams can be used as an alternative representation for quantum circuits.
However, ZX-diagrams are also more versatile than unitary circuits. Consider for example the following construction of the \CZ gate as a ZX-diagram:
\begin{align}
\label{eq:zx-derived-gates}
  \CZ\ &\propto\ \tikzfig{cz-small} \;.
\end{align}
The right-hand-side demonstrates a different diagrammatic construction for \CZ, that does not immediately look circuit-like, with the Hadamard-box representing some sort of interaction of two qubits rather than the evolution of a single qubit.

In fact, this versatility is reflected in the property that ZX-diagrams are universal for \emph{all} linear maps between any number of qubits. To see this, note that we can represent states as in Eq.~\eqref{eq:basic-maps-and-states}. By composing tensor products of these states with some unitary we can write down any quantum state. By the map-state duality of quantum theory (i.e.~the Choi-Jamio\l{}kowski isomorphism), we can then also write every linear map, see for instance~\cite{vandewetering2020zxcalculus} for the details.

The universality of the gate-set $\mathcal B$ and of the ZX-calculus means that any unitary operator on some fixed number of qubits may be represented by some `gadget' in the ZX calculus, consisting of some fixed diagram of finite size --- though as the example of CZ in Eq.~\eqref{eq:zx-derived-gates} shows, there may also be `gadgets' which represent a unitary operator which do \emph{not} consist of sequential and parallel composition of diagrams from Eq.~\eqref{eqn:spiders}. Indeed, even the representation of the \CNOT\ is by a simple `gadget' of two nodes, which is not describable as a composition of the other single-node `gadgets'.
In this respect, \zxdiagrams\ represent a more versatile notation than a conventional circuit notation.
This raises the question of how, given a representation of some unitary $U$ as a \zxdiagram, one might find another representation of $U$ which consists of just compositions from the universal gate-set $\mathcal B$.
This is the problem that this paper is concerned with.

\zxdiagrams\ are more than just a notation for unitary circuits (and non-unitary operators more generally): they may be used to perform computations.
Specifically, \zxdiagrams\ come with a set of graphical rewrite rules, which may be used to find equivalent diagrams which represent the same state or operator, just as one might manipulate an algebraic expression.
This rewrite system is \emph{complete}~\cite{HarnyCompleteness,vilmarteulercompleteness}: unlike other circuit diagrams, one may show that two equivalent \zxdiagrams\ are equivalent though transformations of diagrams alone.
The possible advantage of this is that \zxdiagrams\ can often concisely represent operators which have a very large number of non-zero coefficients, and so that this reasoning can be done efficiently while it could not be done using the matrices directly.
For instance, one of the rewrite rules we will use in this paper is \emph{spider fusion}:
\[\tikzfig{spider-rule-Z}\qquad\qquad\quad \tikzfig{spider-rule-X}\]
These rules say that we can fuse together adjacent spiders of the same colour.

While these rewrite rules are not immediately relevant to our results, the fact that it is possible to compute with \zxdiagrams\ is the motivation for considering this particular representation of unitary circuits, and also motivates the concept of considering different \zxdiagrams\ which represent the same unitary transformation.
We refer the interested reader to~\cite{vandewetering2020zxcalculus} for an overview.

\subsection{Circuit extraction}

In the above section we saw that we can get ZX-diagrams directly from quantum circuits.
We can also get ZX-diagrams from considering measurement patterns in the \emph{one-way model}~\cite{MBQC1}. In the one-way model of quantum computation we start with a large \emph{graph state}, on which we then do subsequent measurements, where the choice of measurement angle and axis may depend on previous measurement outcomes. This leads to another universal model of quantum computation. The one-way model can be straightforwardly represented in the ZX-calculus~\cite{DP2,Backens2020extraction}.

An important property of a one-way computation is that we can perform a computation deterministically, so that we perform the same overall computation regardless of individual measurement outcomes. A sufficient property for ensuring that deterministic processes are possible on a given resource state is that its underlying graph has a property known as \emph{gflow}~\cite{GFlow}.
This is an efficiently verifiable combinatorial condition on the entangled resource. 

When we represent a one-way computation with gflow as a ZX-diagram, the gflow ensures that certain `local' parts of the diagram correspond to individual unitary gates, in a way which can be iteratively translated into an actual unitary circuit.
In this case we can hence \emph{extract} a unitary quantum circuit from the ZX-diagram that represents the one-way computation. See for instance~\cite{cliffsimp,Backens2020extraction,Simmons2021Measurement} for several variations on this idea.

Measurement-based quantum computation like the one-way model is a type of non-unitary quantum computation. Another type of non-unitary model is given by doing \emph{lattice surgery} in the surface code~\cite{horsman2011quantum,horsman2017surgery}. A lattice surgery procedure can also be represented as a ZX-diagram~\cite{horsman2017surgery}. Just as in the one-way model, there is a flow condition that ensures such a calculation is deterministic, and that the resulting ZX-diagram can be step-by-step rewritten into a unitary circuit~\cite{deBeaudrap2020Paulifusion}.

We see that there are several quantum computational models that can be written in terms of ZX-diagrams, which can be rewritten into a unitary quantum circuit efficiently when they satisfy some condition. The type of flow condition required for these procedures ensures that the diagram can't get `too wild' in the middle, so that we can stepwise rewrite the diagram into something that looks more like a circuit.
A natural question to ask then is how much we can weaken such additional conditions, and in particular if we can transform a ZX-diagram into a circuit efficiently in the most general setting, where the only condition we require of the ZX-diagram is that it is proportional to a unitary.
The main result of this paper is that such a general efficient procedure most likely does not exist.

\subsection{Background on computational complexity}

Finally, we provide some background on computational complexity.
We assume knowledge of~\P, the boolean satisfiability problem~$\mathbf{SAT}$, oracle machines,~\NP and nondeterministic Turing machines (NTMs) in general.
Our results concern \emph{Cook reductions} (in fact, usually Cook[1] reductions). A Cook reduction from a problem $\mathbf X$ to another problem $\mathbf Z$ is an algorithm for solving $\mathbf X$ using a deterministic Turing machine which halts in polynomial time, but which may query an oracle (in the case of a Cook[1] reduction, exactly once) for $\mathbf Z$.
This implies that, modulo some polynomial-time computation, the problem $\mathbf Z$ is at least as hard as $\mathbf X$; and that if $\mathbf Z \in \P$, we also have $\mathbf X \in \P$. In symbols we may write $\mathbf X \in \P^{\mathbf Z}$.
Our results will generally concern problems $\mathbf Z$ related to ZX-diagrams and problems $\mathbf X$ which are at least \NP-hard (\emph{i.e.},~they suffice to solve $\mathbf{SAT}$).

Quantum circuits (specifically: uniform circuit families, as described in Appendix~\ref{apx:circuits}) form a model of computation, which may be considered to generate random outcomes through measurement operations.
The class \BQP\ consists of decision problems which can be decided with bounded error (with error probability less than $\tfrac{1}{3}$, say) by such circuit families, and represents the decision problems that can be practically solved by an (idealised) quantum computer.
It is not expected that either of $\NP$ or $\BQP$ contain the other. So if we can reduce in polynomial time (by many-to-one or oracle reductions) an \NP-complete problem to some problem $\mathbf X$, then we expect $\mathbf X$ to be intractable for quantum computers.
Certain modifications of the quantum computational model do allow for more difficult problems to be solved, however.
For instance, \PostBQP\ is the class of problems which may be solved with bounded error by a uniform quantum circuit family, \emph{conditioned on some other measurement} yielding a specific outcome (which occurs with non-zero probability).
This `conditioning' restriction is known as \emph{postselection}, and appears to be operationally very powerful, as \PostBQP coincides with the class \PP, of decision problems for which a `yes' instance is accepted on more than half of the branches of some NTM halting in polynomial time.

The class \P\ is closed under oracles: a deterministic Turing machine equipped with an oracle for some problem in \P cannot decide more problems in polynomial time than a normal Turing machine, so that $\P^\P = \P$.
The same is true for \BQP: any decision problem solvable (with bounded error) by a uniform family of quantum circuits, can also be solved (with bounded error) by some other family of quantum circuits without oracle access, so that $\BQP^\BQP = \BQP$.
The same is not true, however, for \NP: it is not known whether $\NP^\NP$ (the class of decision problems, for which there is an NTM with an oracle for a problem in \NP, halts in polynomial time and accepts in some branch precisely for `yes' instances) is equal to \NP.
It is widely conjectured that $\Sigma_2^{\mathrm p} := \NP^\NP \ne \NP$, and indeed that $\Sigma_3^{\mathrm p} := \NP^{\Sigma_2^{\mathrm p}} = \NP^{\NP^\NP} \ne \NP^\NP$, and so forth.
The union of $\Sigma_n^{\mathrm p} := \NP^{\Sigma_{n-1}^{\mathrm p}}$ for all $n>1$, defines the class \PH, called the \emph{polynomial hierarchy}~\cite{Stockmeyer-1977}.

The hardness results which we are most concerned with involve problems in \sharpP: the class of problems which may be reduced to counting the number of accepting branches of some NTM on a given input.
In particular, we are interested in the problem \numberSAT, of counting the number of `solutions' $x \in \{0,1\}^n$ to an instance of \textbf{SAT}, presented as a formula for a function $f: \{0,1\}^n \to \{0,1\}$, where a `solution' satisfies $f(x) = 1$.
The problem \numberSAT\ is \sharpP-complete~\cite{Valiant-1979}, as is tensor contraction over the natural numbers~\cite{tensors-hard}, and `strong simulation' (\emph{i.e.},~precise estimation of explicit measurement probabilities) of uniform quantum circuit families~\cite{nest2010clifford}.
The \sharpP-completeness here means that a Cook reduction from any of these problems to some problem $\mathbf Z$, establishes that there is a Cook reduction from \emph{any} problem $\mathbf X \in \sharpP$ to $\mathbf Z$.
in this case we say then that $\mathbf Z$ is ``\sharpP-hard''.
The computational power of \sharpP\ is considered to be significantly greater than that of \NP. 
In particular, Toda~\cite{Toda-1991} showed that $\PH \subseteq \P^\sharpP$.

\section{Proof of hardness of Circuit extraction}\label{sec:circuitExtraction}

We now present the central problem of our work.

\begin{quote}
\noindent $\CE$ \\
\noindent \textbf{Input}: A ZX-diagram $D$ with $n$ inputs and outputs and at most $k$ wires and/or spiders, and a set $\mathcal G$ of unitary gates (each acting on at most $O(1)$ qubits). \\
\noindent \textbf{Promise}: The operator denoted by $D$ is proportional to a unitary. \\
\noindent \textbf{Output}: Either \textbf{(a)}~a~$\text{poly}(n,k)$-size circuit $C$, expressed as a sequence of gates from $\mathcal G$  and expressing an $n$-qubit unitary that is proportional to the operator denoted by $D$, if such a circuit exists; or \textbf{(b)}~a~message that no such circuit exists, if that is the case.
\end{quote}

Note that here we make no assumptions on the specific gate set $\mathcal G$, apart from the computability of the coefficients as described in Section~\ref{sec:circuits}, and that the number of qubits which is bounded by some constant. One might object to the requirement that the output list of gates must be polynomially related to the size of the input ZX-diagram: however, as we are interested in whether the extraction problem can be solved efficiently, the restriction on the size of $C$ follows from the time required to represent it as a list of gates.

The above problem can of course also be stated for any related graphical language for quantum operations, such as the ZH-calculus~\cite{backens2018zhcalculus} or the ZW-calculus~\cite{hadzihasanovic2015diagrammatic}. Since such diagrams can be efficiently translated into one another, these problems are of equivalent hardness.
There are some other reasonable variations we can consider of \textbf{CircuitExtraction} that we will discuss in the next section.

We will now show that $\CE$ is \sharpP-hard. We do this by building a diagram that is proportional to a unitary based on a SAT instance, and showing that the resulting matrix the diagram represents is uniquely determined by the number of solutions of the instance.

Let $f:\{0,1\}^n\to \{0,1\}$ be a Boolean formula with $\text{poly}(n)$ terms. We say a bit string $x\in\{0,1\}^n$ is a solution to $f$ when $f(x) = 1$.
The first step will be to build a ZX-diagram that implements the linear map $L_f$ that takes $n$ qubits to $1$ qubit by $L_f\ket{x} = \ket{f(x)}$.
We can of course represent $f$ as a tree of AND and NOT operations so that to construct $L_f$ it suffices to find linear maps that implement AND and NOT on $\ket{x}$.

We may consider ZX diagrams for ``quantum'' versions of the boolean logical AND gate and NOT gate, \emph{i.e.},~linear operators such that $\text{NOT}\ket{\texttt 0} = \ket{\texttt  1}$, $\text{NOT}\ket{\texttt 1} = \ket{\texttt 0}$, and $\text{AND}\ket{x,y} \mapsto \ket{x\cdot y}$.
These operations can be represented (up to a constant factor) by the following ZX-diagrams:
\[\text{NOT} \ \ = \ \ \tikzfig{X}\qquad\qquad \text{AND} \ \  \propto \ \ \tikzfig{AND-gate}\]
The NOT gate is just an $X_\pi$ gate, but the AND is more complicated. It is based on the 4 $T$-gate representation of the CCZ gate from~\cite{kissinger2021simulating}.

By combining these diagrammatic gadgets for NOT and AND we can build the operation $L_f$ as a ZX-diagram using $\text{poly}(n)$ spiders.
Now, note that:
\begin{equation}\label{eq:boolean-output-zx}
  \tikzfig{boolean-output-zx} \ = \ \sum_{x} L_x \ket{x} \ =\  \sum_{x} \ket{f(x)} \ = \ \frac{N_0}{2^n} \ket{\texttt0} + \frac{N_1}{2^n} \ket{\texttt1} \ =:\ a_0 \ket{\texttt0} + a_1\ket{\texttt1}
\end{equation}
where $N_1$ is the number of solutions of $f$, $N_0 = 2^n - N_1$ is the number of `non-solutions' of $f$, and we set $a_0 = N_0/N$ and $a_1 = N_1/N$ for $N := 2^n = N_0 + N_1$.
The resulting state is not normalised: to normalise it we should multiply both sides by $(a_0^2+a_1^2)^{-1/2}$.

We use the `state' described in Eq.~~\eqref{eq:boolean-output-zx} as the input of a controlled operation.
By choosing the controlled operation appropriately, we will be left with something proportional to a unitary.
We may for instance consider the following diagram:
\begin{equation}\label{eq:unitary-sat-zx}
  \tikzfig{unitary-sat-zx}
\end{equation}
To see this is unitary first recall that a $Y$ rotation over an angle $\alpha$ applied to $\ket{\texttt 0}$ gives $Y_\alpha\ket{\texttt 0} = \cos(\frac\alpha2)\ket{\texttt 0} + \sin(\frac\alpha2)\ket{\texttt 1}$. Hence the state of Eq.~\eqref{eq:boolean-output-zx}, when properly normalised, can be written as $Y_\alpha\ket{\texttt 0}$ for $\alpha = 2\sin^{-1}\left(\smash{\frac{a_1}{\sqrt{a_0^2+a_1^2}}}\right)$.
We can then calculate:
\begin{equation}\label{eq:unitary-sat-zx2}
  \tikzfig{unitary-sat-zx2}
\end{equation}
In the above, we use the relation $Y_\alpha = Z_{-\frac\pi2} X_\alpha Z_{\frac\pi2}$, and some simple ZX-calculus rewrites.
Hence, the diagram of Eq.~\eqref{eq:unitary-sat-zx} is proportional to an $X_\alpha$ rotation where $\alpha$ is uniquely determined by the number of solutions to $f$.
Note that this operation can be easily represented (with at most three gates) using a gate-set such as $\{H, Z_\alpha, \CNOT\}$, in which the set of values of allowed angles $\alpha$ include those that may arise in the diagram of Eq.~\eqref{eq:unitary-sat-zx2} for some number of solutions $N_1$ to the formula $f$; such an operation will be representable using other gate-sets as well.%
	\footnote{%
		Note that the gate-set described here cannot be a single, finite gate set for all values of $n$.
		However, the angles $\alpha$ arising out of instances of satisfiability in this way can be specified in $O(n)$ bits, precisely by characterising them in the way that we have as being related to some integer ranging in $\{0,1,\ldots,2^n\}$ via inverse trigonometric functions.
		For remarks on what can be achieved with finite gate-sets, the reader may be interested in our remarks on the related problem \ACE, in Section~\ref{sec:variations}.
	}

\begin{theorem}\label{thm:post-selection-hardness}
	 $\CE$ is \sharpP-hard.
\end{theorem}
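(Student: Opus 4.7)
The strategy is a polynomial-time Cook[1] reduction from \numberSAT\ to $\CE$; since \numberSAT\ is \sharpP-complete, this establishes \sharpP-hardness. Given a Boolean formula $f : \{0,1\}^n \to \{0,1\}$ with $\mathrm{poly}(n)$ terms, the construction culminating in Eq.~\eqref{eq:unitary-sat-zx} already produces in polynomial time a single-qubit ZX-diagram $D_f$ of size $\mathrm{poly}(n)$ that is proportional to $X_\alpha$, where $\alpha = 2\sin^{-1}\!\bigl(a_1/\sqrt{a_0^2 + a_1^2}\bigr)$ with $a_i = N_i/2^n$. In particular $D_f$ satisfies the unitarity promise of $\CE$, so the "reduction" side of the argument is essentially given for free.

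For the oracle call, I would fix the gate-set $\mathcal{G} = \{H, Z_\beta, \CNOT\}$ from Section~\ref{sec:circuits}: since $X_\alpha = H Z_\alpha H$, a three-gate circuit certainly realises $X_\alpha$ for our specific $\alpha$, so the oracle must return a $\mathrm{poly}(n)$-size unitary circuit $C$ on one qubit proportional to $X_\alpha$, rather than reporting that no such circuit exists. The next step is to compute the $2\times 2$ matrix $M$ of $C$ classically, by multiplying together the poly-many $2 \times 2$ gate matrices in sequence; since each gate's entries are computable to $\mathrm{poly}(n)$ bits of precision in $\mathrm{poly}(n)$ time, $M$ is obtainable to any desired $\mathrm{poly}(n)$ precision. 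From $M \propto X_\alpha$, the angle $\alpha$ is read off from ratios of entries of $M$ up to an irrelevant global phase, and $N_1$ is then recovered by inverting the map $N_1 \mapsto \alpha$. This inversion is well-defined because, with $N_0 + N_1 = 2^n$ fixed, $N_1 \mapsto 2\sin^{-1}\!\bigl(N_1/\sqrt{N_0^2 + N_1^2}\bigr)$ is strictly monotone on $\{0, 1, \ldots, 2^n\}$, so $\alpha$ determines $N_1$ uniquely.

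The main technical obstacle is precision: the $2^n + 1$ possible values of $\alpha$ may be spaced as finely as $\Theta(2^{-n})$, so we must estimate $\alpha$ to roughly $n$ bits of precision in order to resolve $N_1$. Since a product of $\mathrm{poly}(n)$ many $2\times 2$ matrices with $\mathrm{poly}(n)$-bit entries can be computed to $\mathrm{poly}(n)$ bits of precision in polynomial time, this is achievable but warrants careful bookkeeping of rounding errors. A secondary point worth checking is that $\CE$'s specification demands the returned circuit be \emph{exactly} proportional to the operator denoted by $D_f$, which rules out any pathological oracle that might return a circuit implementing some $X_\beta$ for $\beta$ close to but not equal to $\alpha$. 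Composing the polynomial-time construction of $D_f$, one oracle call to $\CE$, and polynomial-time classical post-processing then yields the required Cook[1] reduction.
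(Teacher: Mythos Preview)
Your proposal is correct and follows essentially the same argument as the paper: construct the single-qubit diagram of Eq.~\eqref{eq:unitary-sat-zx} denoting $X_\alpha$, invoke the $\CE$ oracle with a gate-set rich enough to express $X_\alpha$ exactly, multiply out the returned poly-size $2\times 2$ circuit to sufficient precision, and recover $N_1$ from $\alpha$. Your discussion of monotonicity, precision bookkeeping, and the exactness guarantee of $\CE$ spells out a few points the paper leaves implicit, but the structure of the reduction is identical.
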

\begin{proof}
	\numberSAT is a \sharpP-complete problem, so it suffices to show that we can count the number of solutions to a Boolean formula using a call to an oracle which solves \CE. Given a Boolean formula $f:\{0,1\}^n\to \{0,1\}$ with $\mathrm{poly}(n)$ terms, construct the diagram of Eq.~\eqref{eq:unitary-sat-zx}. The diagram here for $L_f$ uses $\mathrm{poly}(n)$ of the diagrammatic gadgets for NOT and AND, and hence the complete diagram consists of $\mathrm{poly}(n)$ spiders, each of which may be restricted to having at most $3$ wires.
	We may apply the \CE oracle on this diagram subject to a suitable gate set that can exactly generate the possible X-rotations $X_\alpha$ which may arise. 
  As $C$ is a single-qubit circuit with at most $\mathrm{poly}(n)$  gates, we can calculate the unitary it implements, up to any required precision $2^{-O(\mathrm{poly}(n))}$, in polynomial time.
	We know that the operation realised is of the form $X_\alpha$, so to determine the value of $N_1$, it suffices to estimate the entries of the resulting $X_\alpha$ to within an error of $\tfrac{1}{2\sqrt{2}}$.
	Determining the value of $a_1 / \sqrt{a_0^2 +a_1^2}$ to $2n$ bits of precision is more than sufficient to do this.
\end{proof}

\begin{corollary}
  If there is a polynomial time algorithm for $\CE$, then $\P = \P^{\sharpP}$. In particular, the polynomial hierarchy collapses to the first level: $\P = \NP = \PH$.
\end{corollary}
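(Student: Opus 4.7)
The plan is to chain together three facts: the \sharpP-hardness of $\CE$ just proved, the closure of \P under \P-oracles, and Toda's theorem. First I would unpack what a polynomial-time algorithm for $\CE$ gives us at the level of oracle classes. Since Theorem~\ref{thm:post-selection-hardness} exhibits a Cook reduction from the \sharpP-complete problem \numberSAT to $\CE$, any problem in \sharpP admits a polynomial-time Turing machine that makes oracle calls to $\CE$; that is, $\sharpP \subseteq \P^{\CE}$. If $\CE \in \P$, then any such oracle call can be simulated in polynomial time by a standard subroutine, so $\P^{\CE} = \P$ (this is the well-known fact that $\P$ is closed under \P-oracles).

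Combining these two inclusions yields $\sharpP \subseteq \P$, and hence $\P^{\sharpP} \subseteq \P^{\P} = \P$. Together with the trivial inclusion $\P \subseteq \P^{\sharpP}$ (the oracle can simply be ignored), this gives the first part of the claim, $\P = \P^{\sharpP}$.

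For the second part, I would invoke Toda's theorem~\cite{Toda-1991}, already mentioned in the preliminaries, which asserts $\PH \subseteq \P^{\sharpP}$. Under our hypothesis we just derived $\P^{\sharpP} = \P$, so $\PH \subseteq \P$. The reverse inclusion $\P \subseteq \NP \subseteq \PH$ is immediate, and sandwiching yields $\P = \NP = \PH$, collapsing the polynomial hierarchy to its ground level.

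There is no real obstacle here: the corollary is essentially an unpacking of what \sharpP-hardness means together with Toda's theorem. The only care needed is to check that the reduction produced by Theorem~\ref{thm:post-selection-hardness} is indeed a polynomial-time Cook reduction (which it is, in fact Cook[1], since a single oracle query suffices), so that composing it with a polynomial-time algorithm for $\CE$ gives a genuine polynomial-time algorithm for \numberSAT, and more generally for any \sharpP problem via the \sharpP-completeness of \numberSAT.
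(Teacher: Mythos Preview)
Your proposal is correct and follows essentially the same route as the paper: derive $\numberSAT \in \P$ from the Cook reduction of Theorem~\ref{thm:post-selection-hardness}, conclude $\P^{\sharpP} = \P$, and then read off the collapse of \PH. The paper's own proof is even terser (one sentence), and simply notes $\NP \subseteq \P^{\sharpP} = \P$; your explicit appeal to Toda's theorem for the \PH collapse is a perfectly good way to finish, though one could equally observe that $\P = \NP$ alone already collapses \PH level by level.
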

\begin{proof}
  If $\CE$ can be done in polynomial time, then the above shows that we can solve $\numberSAT$ in polynomial time, and hence $\NP \subseteq \P^\sharpP = \P$.
\end{proof}

\begin{remark}
  Our construction of the diagram we use to prove our result might seem somewhat arbitrary. To motivate it some more, first realise that instead of the function $L_f$, we could have used the standard unitary quantum oracle for a Boolean function $U_f$ which acts on $n+1$ qubits via $U_f\ket{x, b} = \ket{x, b\oplus f(x)}$. We can get $L_f$ out of $U_f$ by post-selecting the top $n$ qubits to $\bra{+}$. Using the language of post-selection, we may then present a circuit version of Eq.~\eqref{eq:unitary-sat-zx}:
  \begin{equation}\label{eq:post-selection-SAT-circuit}
  \tikzfig{post-selection-SAT-circuit}
  \end{equation}
  The top part is calculating the number of solutions, while the bottom part ensures that this information is fed into a qubit in such a way that the overall operation is proportional to a unitary.
  The choice of $iX$ is for the sake of simplicity: any unitary $U$ that satisfies $U = -U^\dagger	$ would also suffice, such as $iY$ or $iZ$.
\end{remark}

\begin{remark}
  Even though we can view the diagram as a post-selected circuit, this does not seem to be where the power of the procedure comes from, as it is for instance in Aaronson's~\cite{aaronson2005quantum} characterisation \textbf{PostBQP = PP}.
  The probability of observing the `correct' outcome is bounded from below by a constant, and does not depend on $n$. This means in particular that by doing repeat-until-success we could with high probability implement the circuit Eq.~\eqref{eq:post-selection-SAT-circuit} on a quantum computer. However, this does not allow you to solve \numberSAT, as adjacent possibilities of the rotation angle $\alpha$ are exponentially close. So rather, the power of the procedure comes from getting an explicit description of the circuit which allows us to exactly calculate the rotation angle.
\end{remark}

\section{Variations on extraction}
\label{sec:variations}

There are several variations on circuit extraction which we can consider, all of which also turn out to be hard.

The essential trick we used in our proof is that our resulting circuit has just \emph{one} qubit, and hence a description of a unitary on it can easily be transformed into the actual unitary it implements by just multiplying all the resulting matrices. But of course the same statement remains true if we have slightly more than one qubit, say a logarithmic amount in the size of the \textbf{SAT} instance. We also see that it then doesn't matter if our circuit contains auxiliary qubits, measurements, or classically-controlled corrections. All of these can be efficiently calculated as long as the number of qubits is small enough. Therefore, let's define the following variant of circuit extraction.

\begin{quote}
\noindent \textbf{AuxCircuitExtraction} \\
\noindent \textbf{Input}: A ZX-diagram $D$ with $n$ inputs and outputs and at most $k$ wires and/or spiders, and a set $\mathcal G$ of unitary gates (each acting on at most $O(1)$ qubits). \\
\noindent \textbf{Promise}: The operator denoted by $D$ is proportional to a unitary. \\
\noindent \textbf{Output}: Either \textbf{(a)}~a~deterministic $n$-qubit circuit implementing the unitary of the input ZX-diagram, described as a $\text{poly}(n,k)$ length list of gates, auxiliary qubit preparations, measurements, and classical corrections, with at most $O(\log k)$ auxiliary qubits; or \textbf{(b)}~a~message that no such circuit exists, if that is the case.
\end{quote}

\begin{theorem}
	\textbf{\upshape AuxCircuitExtraction} is \sharpP-hard.
\end{theorem}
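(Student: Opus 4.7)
The proof mirrors that of Theorem~\ref{thm:post-selection-hardness}: I will use the same reduction from \numberSAT, invoking the \textbf{\upshape AuxCircuitExtraction} oracle on precisely the diagram constructed in Eq.~\eqref{eq:unitary-sat-zx}. Given a Boolean formula $f:\{0,1\}^n \to \{0,1\}$ with $\mathrm{poly}(n)$ terms, this diagram has $k = \mathrm{poly}(n)$ spiders and wires, denotes an operator proportional to $X_\alpha$ for an angle $\alpha$ uniquely determined by the solution count $N_1$, and can be built in polynomial time. The oracle returns a circuit $C$ acting on $1$ data qubit together with at most $O(\log k) = O(\log n)$ auxiliary qubits, consisting of at most $\mathrm{poly}(n)$ gates, measurements, and classical corrections, and implementing a deterministic operation proportional to the denotation of the diagram.

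The key observation, and the only thing one must verify beyond the proof of Theorem~\ref{thm:post-selection-hardness}, is that the angle $\alpha$ can still be recovered efficiently from the more general description of $C$. Since the total register has $1 + O(\log n)$ qubits, its state space has dimension $2^{1 + O(\log n)} = \mathrm{poly}(n)$, so I would classically simulate $C$ on this polynomially-sized Hilbert space using the density-matrix formalism: initialise the auxiliaries in $\ketbra{\texttt{0}\cdots\texttt{0}}{\texttt{0}\cdots\texttt{0}}$, propagate the density matrix through each gate by conjugation (a $\mathrm{poly}(n) \times \mathrm{poly}(n)$ matrix multiplication), handle each measurement as a sum $\sum_j \Pi_j \rho \Pi_j$ weighted by outcomes, and apply the prescribed classically-controlled corrections based on the measurement outcomes. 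Since each of the $\mathrm{poly}(n)$ operations costs $\mathrm{poly}(n)$ time, the full simulation runs in polynomial time, to any desired precision $2^{-O(\mathrm{poly}(n))}$. Tracing out the auxiliaries yields the effective channel on the data qubit.

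By the promise, this effective channel is proportional to the unitary $X_\alpha$. Reading off its matrix entries to $O(n)$ bits of precision determines $a_1/\sqrt{a_0^2 + a_1^2}$ with error less than $\tfrac{1}{2\sqrt{2}}$ (exactly as in the proof of Theorem~\ref{thm:post-selection-hardness}), from which the integer $N_1 \in \{0,1,\ldots,2^n\}$ is uniquely recovered. Hence a polynomial-time reduction from \numberSAT to \textbf{\upshape AuxCircuitExtraction} exists, establishing \sharpP-hardness.

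There is essentially no hard step here; the only thing to check is the efficiency of classical simulation with measurements and feedforward, which follows immediately from the logarithmic bound on the number of qubits. The bookkeeping for mid-circuit measurements with classically-controlled corrections is the only part requiring care, but it is a routine consequence of the density-matrix description of such processes on a poly-dimensional Hilbert space.
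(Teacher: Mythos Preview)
Your proof is correct and follows essentially the same approach as the paper: construct the diagram of Eq.~\eqref{eq:unitary-sat-zx}, invoke the oracle, and exploit the fact that $O(\log k)$ auxiliary qubits yield a $\mathrm{poly}(k)$-dimensional state space so that the circuit can be classically simulated in polynomial time to recover $\alpha$ and hence $N_1$. Your treatment is in fact more explicit than the paper's, which simply says one may ``multiply the matrices together''; your use of the density-matrix formalism to handle mid-circuit measurements and classically-controlled corrections spells out what that phrase must mean in this setting.
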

\begin{proof}
	We construct the same diagram as in the proof of Theorem~\ref{thm:post-selection-hardness} to solve a \numberSAT instance, except that we can no longer assume that the final circuit will act only on a single qubit: instead it may act on up to $O(\log k)$ qubits, including the operations on the auxiliary qubits.
	The size of the matrices involved when trying to calculate the resulting unitary is $O(2^{\log \text{poly}(k)}) = O(\text{poly}(k))$, where here $k$ is the size of the input diagram.
	We may then still multiply the matrices together in polynomial time to obtain sufficiently precise estimates of the coefficients.
\end{proof}

One might also object that requiring the output unitary to \emph{exactly} represent the ZX-diagram is too strong --- in particular, impossible in general even with an approximately universal, finite gate set --- and wish for an approximate output instead.
We say that a unitary operator $\tilde U$ is an \emph{$\varepsilon$-approximation} of another unitary $U$ for some $\varepsilon > 0$, if $\lVert \tilde U - e^{i\alpha} U \rVert < \varepsilon$ for some global phase $\alpha$.
Here, $\lVert M \rVert$ denotes the operator norm of $M$: the largest singular value of $M$.
\begin{quote}
\noindent $\ACE$ \\
\noindent \textbf{Input}: A ZX-diagram $D$ with $n$ inputs and outputs and at most $k$ wires and/or spiders, a set $\mathcal G$ of unitary gates (each acting on at most $O(1)$ qubits), and a precision parameter $\varepsilon > 0$. \\
\noindent \textbf{Promise}: The operator denoted by $D$ is proportional to a unitary. \\
\noindent \textbf{Output}: Either \textbf{(a)}~a~$\text{poly}(n,k,\log(1/\varepsilon))$-size circuit $C$, expressed as a sequence of gates from $\mathcal G$  and expressing an $n$-qubit unitary $\tilde U$ which is an $\varepsilon$-approximation to either the operator denoted by $D$, or some operator proportional to it; or \textbf{(b)}~a~message that no such circuit exists, if that is the case.
\end{quote}

\begin{theorem}
  $\ACE$ is \sharpP-hard.
\end{theorem}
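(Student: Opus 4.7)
The plan is to re-use the construction from the proof of Theorem~\ref{thm:post-selection-hardness} and choose the precision parameter $\varepsilon$ small enough that the approximate circuit still determines $N_1$ exactly. Given a Boolean formula $f$ with $\mathrm{poly}(n)$ terms, we build the same single-qubit ZX-diagram as in Eq.~\eqref{eq:unitary-sat-zx}, whose denotation is proportional to the rotation $X_\alpha$ with $\sin(\alpha/2) = N_1/\sqrt{N_0^2 + N_1^2}$ and $\cos(\alpha/2) = N_0/\sqrt{N_0^2 + N_1^2}$, where $N_0 + N_1 = 2^n$. We then invoke $\ACE$ with precision parameter $\varepsilon := 2^{-(n+2)}$ and any convenient single-qubit gate set; the oracle returns a circuit $C$ of size $\mathrm{poly}(n, k, \log(1/\varepsilon)) = \mathrm{poly}(n, k)$ whose denotation $\tilde U$ satisfies $\lVert \tilde U - e^{i\beta} X_\alpha \rVert < \varepsilon$ for some phase $\beta$.

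Since $C$ is a single-qubit circuit of polynomial length with efficiently computable gate matrices, we can multiply its gates to obtain the entries of $\tilde U$ to any polynomially small precision in polynomial time. Because the operator norm dominates entry-wise magnitudes, $\bigl| |\tilde U_{ij}| - |(X_\alpha)_{ij}| \bigr| < \varepsilon$ for each $i,j$ (the global phase $e^{i\beta}$ drops out). Writing $s := |\sin(\alpha/2)|$, $c := |\cos(\alpha/2)|$, $\tilde s := |\tilde U_{01}|$, $\tilde c := |\tilde U_{00}|$, a direct perturbation calculation yields
\[
\left| \frac{\tilde s}{\tilde s + \tilde c} - \frac{s}{s + c} \right| \;\leq\; \frac{\varepsilon}{\tilde s + \tilde c} \;\leq\; \frac{\varepsilon}{1 - 2\varepsilon},
\]
using $s + c = 2^n/\sqrt{N_0^2 + N_1^2} \in [1, \sqrt{2}]$. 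Since $s/(s+c) = N_1/(N_0+N_1) = N_1/2^n$, multiplying the estimate by $2^n$ and rounding to the nearest integer recovers $N_1$ exactly provided $2^n \varepsilon/(1 - 2\varepsilon) < 1/2$, which clearly holds for $\varepsilon = 2^{-(n+2)}$.

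The heart of the argument is quantitative: our construction packs exponentially many possible values of $N_1$ into a single rotation angle, so adjacent counts correspond to matrix entries only $\Theta(2^{-n})$ apart. The main obstacle, and the reason the whole reduction survives being polynomial, is that the output-size bound in $\ACE$ depends only logarithmically on $1/\varepsilon$: the $n$ bits of precision we need add only $O(n)$ to the circuit size budget. Putting everything together, the construction is a polynomial-time Cook reduction from $\numberSAT$ to $\ACE$, establishing $\sharpP$-hardness.
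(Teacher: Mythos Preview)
Your proof is correct and follows essentially the same approach as the paper: construct the single-qubit diagram of Eq.~\eqref{eq:unitary-sat-zx}, call the oracle with $\varepsilon$ exponentially small in $n$ (so that $\log(1/\varepsilon)=O(n)$ keeps the output polynomial), multiply out the returned single-qubit circuit, and round to recover $N_1$. The only cosmetic difference is in how you dispose of the unknown global phase $e^{i\beta}$: the paper rotates $\tilde U$ so that its top-left entry is real and then reads off the entries directly, whereas you take absolute values of the entries and recover $N_1$ from the ratio $\tilde s/(\tilde s+\tilde c)$; both work, and your explicit perturbation bound is a clean way to make the rounding step rigorous.
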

\begin{proof}
  For a given \textbf{SAT} instance $f:\{0,1\}^n\to\{0,1\}$ we again construct the same diagram as in the proof of Theorem~\ref{thm:post-selection-hardness} which denotes a unitary $X_\alpha$\,, where $\alpha$ allows us to determine the number of solutions to $f$. This diagram has $\text{poly}(n)$ spiders. Set $\varepsilon = 2^{-cn}$ for some large enough constant $c$. Then applying \ACE gives rise to a circuit, which has $\text{poly}(\text{poly}(n),\log(1/2^{-cn})) = \text{poly}(n)$ gates. We can hence just multiply out the matrices in order to determine the unitary $U$ it implements. This unitary $U$ approximates $X_\alpha$ to degree $2^{-cn}$. Since the top left entry of $X_\alpha$ is real, we can first multiply $U$ by the appropriate global phase to ensure it is also real. If we have picked $c$ large enough then the entries of $U$ are then within $\frac12 2^{-n}$ of that of $X_\alpha$ so that we can determine $\alpha$ by rounding to the nearest allowed value.
\end{proof}
Note that, even for exponentially small angles $\alpha$ as might arise when $f$ has few solutions, circuits of polynomial size do \emph{exist} for $X_\alpha$ when $\mathcal G$ is an approximately universal gate-set: using the Solovay--Kitaev algorithm~\cite{Kitaev-1997,DawsonN-2005} or any of its many refinements (see \emph{e.g.}~Ref.~\cite{Bouland-SK-2021} and references therein), we may synthesise circuits approximating $X_\alpha$ to any precision $\varepsilon$ in time scaling polynomially in $\log(1/\varepsilon)$.
The difficulty of \ACE\ stems from determining \emph{which} angle $\alpha$ to approximate.

Let us consider one final variation on extraction.
One could argue that the reason that we end up with a hard problem in these instances, is because requiring the output to be some kind of circuit is too restrictive. The ultimate goal of circuit extraction is that we wish for the ZX-diagram to be run on a quantum computer, to obtain some probability distribution over outcomes; but the complexity of \CE\ and its variations seems to arise from the complexity of finding a precise description of the procedure to do so.
Cutting out the middle-man, we may consider \emph{any} process which takes as input a unitary ZX-diagram, and produces bit strings as output whose distribution conforms with the one we expect from the unitary.

\begin{quote}
\noindent \textbf{UnitaryZXSampling} \\
\noindent \textbf{Input}: A ZX-diagram $D$ with $n$ inputs and outputs and at most $k$ wires and/or spiders. \\
\noindent \textbf{Promise}: The operator denoted by $D$ is proportional to some unitary $U$. \\
\noindent \textbf{Output}: A sample $x \in \{\texttt0,\texttt1\}^n$ from a probability distribution, given by (or sufficiently close to) $\lvert \bra{x} U \ket{\texttt{0$\cdots$0}}\rvert^2$.
\end{quote}

It is clear that \textbf{UnitaryZXSampling} is at least as hard as $\BQP$: we could just input a ZX-diagram that directly represents a quantum circuit, in which case this problem is equivalent to simulating that circuit.
The reason we write here that the probabilities just have to be `sufficiently close' is because the exact number doesn't matter for the theorem below. To be concrete we could for instance allow the probability to additively deviate by~$1/3$ from the true value.

\begin{theorem}\label{thm:unitaryZXsampling}
  There is a randomised polynomial reduction from \NP to \textbf{\upshape UnitaryZXSampling}. In other words: with access to a \textbf{\upshape PromiseUnitaryZXSampling} oracle---which produces the expected output if the input diagram is unitary and arbitrary output otherwise---we can with high probability solve \NP-complete problems.
\end{theorem}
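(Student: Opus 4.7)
The plan is to combine the post-selection power implicit in ZX-diagrams with the state-to-unitary lifting already used in the proof of Theorem~\ref{thm:post-selection-hardness}. The starting point is that $\NP \subseteq \PP = \PostBQP$~\cite{aaronson2005quantum}, so any SAT instance $f$ admits a polynomial-size post-selected quantum algorithm $C_f$ with bounded error: after running $C_f$ on $\ket{0\cdots 0}$ and post-selecting a designated ``flag'' qubit on $\ket{1}$ (which by definition has nonzero probability), the remaining ``output'' qubit is in a state $\ket{\psi} = \alpha\ket{0}+\beta\ket{1}$ with $\beta^2/(\alpha^2+\beta^2) > 2/3$ if $f$ is satisfiable and $< 1/3$ otherwise. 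By the standard real-amplitude simulation of quantum circuits, one may further assume $\alpha,\beta \in \R$.

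The next step is to translate $C_f$ into a ZX-diagram gate-by-gate, attaching $\ket{0}$-preparation spiders to each initial wire and a $\bra{1}$ effect spider to the flag wire. This yields a polynomial-size ZX-diagram $D_\psi$ with zero inputs and a single output wire, representing the (unnormalised) vector $\ket{\psi}$. One then substitutes $D_\psi$ for the subdiagram $L_f \circ \ket{+}^{\otimes n}$ appearing in the construction of Eq.~\eqref{eq:unitary-sat-zx}: the resulting diagram $D_U$ has one input and one output wire, and by the same derivation as in Eq.~\eqref{eq:unitary-sat-zx2} is proportional to $X_\theta$ with $\sin^2(\theta/2) = \beta^2/(\alpha^2+\beta^2)$. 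The \PostBQP\ requirement that post-selection has nonzero probability guarantees $\ket{\psi} \neq 0$, so the proportionality constant is nonzero and $D_U$ genuinely satisfies the unitarity promise of \textbf{\upshape UnitaryZXSampling}.

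Finally, the reduction queries the oracle on $D_U$ to obtain a bit $x\in\{0,1\}$ with $\Pr[x=1]=\sin^2(\theta/2)$, a value which exceeds $2/3$ precisely when $f$ is satisfiable and lies below $1/3$ otherwise. A constant number of oracle calls followed by a majority vote then decides satisfiability with arbitrarily high confidence in polynomial time, placing $\NP$ inside $\textbf{\upshape BPP}^{\textbf{\upshape UnitaryZXSampling}}$. The main obstacle I anticipate is not any single hard calculation but rather the careful verification of two side conditions: that the construction of Eq.~\eqref{eq:unitary-sat-zx} really does produce a diagram proportional to a unitary for \emph{any} nonzero single-qubit real-amplitude state, not just those of the special form $\sum_x \ket{f(x)}$ arising from $L_f$; and that the \PostBQP\ algorithm can genuinely be chosen with real amplitudes so that no global-phase correction to $D_\psi$ is required. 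Both are routine adaptations of standard material but deserve a moment of care in a formal write-up.
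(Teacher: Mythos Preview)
Your route differs from the paper's. The paper applies Valiant--Vazirani to reduce $\mathbf{SAT}$ to $\mathbf{USAT}$; for a formula with at most one solution the diagram of Eq.~\eqref{eq:unitary-sat-zx} is proportional to $X_{s\alpha}$ for a known but exponentially small $\alpha$ and $s\in\{0,1\}$ encoding satisfiability, and the paper then composes with a constant-size non-unitary ZX-gadget $M$ sending $\ket{0}\mapsto\ket{0}$ and $X_\alpha\ket{0}\mapsto\ket{1}$, boosting the whole diagram to exactly $I$ or $X_\pi$. You instead invoke $\NP\subseteq\PostBQP$ to manufacture a bounded gap up front, sidestepping both the Valiant--Vazirani randomisation and the boosting gadget. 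What your approach would buy is conceptual economy; what the paper's approach buys is that the single-qubit state fed into Eq.~\eqref{eq:unitary-sat-zx} is manifestly pure, since the $L_f$ gadget of Eq.~\eqref{eq:boolean-output-zx} already has exactly one output wire.

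That last point is where your proposal has a genuine gap. The definition of $\PostBQP$ only controls the \emph{marginal} of the output qubit conditioned on the flag: after post-selection the output qubit is generally still entangled with the remaining $m-2$ work qubits, so its reduced state is mixed and there is no well-defined $\ket{\psi}=\alpha\ket{0}+\beta\ket{1}$ to speak of. You write that attaching $\bra{1}$ to the flag ``yields a polynomial-size ZX-diagram $D_\psi$ with \ldots a single output wire'', but nothing in your description disposes of the other wires of $C_f$. Capping them with fixed effects is extra post-selection that may annihilate the state or destroy the $(\tfrac13,\tfrac23)$ gap; leaving them open gives a diagram with many outputs, which cannot be substituted for a one-qubit state in Eq.~\eqref{eq:unitary-sat-zx} while keeping the result square, let alone unitary. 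Neither of the two side conditions you flag touches this. The hole is repairable---for instance, Aaronson's explicit $\PP\subseteq\PostBQP$ argument applied to the $\PP$ encoding of $\mathbf{SAT}$ does post-select every non-output qubit and leaves a genuinely pure single-qubit state---but that argument produces a polynomial \emph{family} of circuits indexed by a scaling parameter, with an acceptance rule more involved than ``$\Pr[x{=}1]>2/3$'', so the write-up would need to reflect that.
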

\begin{proof}
  \textbf{SAT} is an \NP-complete problem. To randomly reduce \NP it however suffices to consider the problem \textbf{USAT} by the Valiant–Vazirani theorem~\cite{ValiantVazirani1985}. \textbf{USAT} asks us to determine whether a Boolean formula is satisfiable, given the promise that it has at most one solution.
  Using the randomised reduction from \textbf{SAT} to \textbf{USAT}, we consider how to solve \textbf{USAT} using a \textbf{\upshape PromiseUnitaryZXSampling} oracle.
  
  Let $f:\{0,1\}^n\to \{0,1\}$ be a Boolean formula that has at most one solution. Construct the diagram Eq.~\eqref{eq:unitary-sat-zx} as in the previous proofs: as a unitary this implements the identity iff $f$ is not satisfiable, and $X_\alpha$ for some fixed angle $\alpha > 0$ when $f$ is satisfiable.
  In the latter case, the value of $\alpha$ is exponentially small, but known precisely, as $f$ has exactly one solution in this case. So we can say the circuit implements $X_{s\cdot \alpha}$ where $s\in\{0,1\}$ encodes whether $f$ is satisfiable or not.

  Let $M$ be the one-qubit (non-unitary) matrix that maps $\ket{\texttt0} \mapsto \ket{\texttt0}$ and $X_\alpha\ket{\texttt0} \mapsto \ket{\texttt1}$, so that in particular $M X_{s\cdot \alpha} \ket{\texttt0} = \ket{s}$.  
  By universality of ZX-diagrams we can find some (constant sized) diagram to represent $M$.
  We can then calculate:
  \begin{equation}\label{eq:ZX-sampling-circuit}
    \tikzfig{ZX-sampling-circuit-prime}
  \end{equation}
  Hence, the ZX-diagram on the left in Eq.~\eqref{eq:ZX-sampling-circuit}  implements either the identity, or an $X_\pi$ operation (that is to say, a NOT operation), depending on whether $f$ is satisfiable. When we feed this ZX-diagram to an oracle for \textbf{PromiseUnitaryZXSampling}, we get either the output $\texttt0$ or $\texttt1$, where a $\texttt0$ indicates with high probability that the circuit is the identity, and a $\texttt1$ indicates that the circuit is a NOT operation. We can repeatedly call the oracle to get additional samples to increase our confidence in the result.
  
  Now suppose $f$ is a general instance of $\mathbf{SAT}$, which may have more than one solution. 
  Using the Valiant--Vazirani reduction multiple times we probabilistically produce different Boolean formulae $f_1,\ldots, f_m$. If $f$ is not satisfiable, then none of the $f_j$ will be satisfiable either and this is what the \textbf{PromiseUnitaryZXSampling} will tell us as well. 
  If $f$ \emph{is} satisfiable, then a significant fraction of the $f_j$ will have a unique solution, so that our oracle tells us they are satisfiable. For the other $f_j$ the oracle will return some arbitrary output. So by picking $m$ large enough there will with high probability be some $f_j$ that will be uniquely satisfiable, and so we can conclude that $f$ is satisfiable as well.
  
  Hence, we can determine with arbitrary high probability whether a \textbf{SAT} instance is satisfiable using enough calls to \textbf{PromiseUnitaryZXSampling}.
\end{proof}

\begin{remark}
  If we knew that the number of solutions to the \textbf{SAT} instance was some other fixed number, then we could pick a different matrix $M'$ to boost the state up to $X_\pi$ gate as above. If we pick $M'$ `slightly wrong', then the resulting diagram will just be close to $X_\pi$. One might think that we could use such a procedure to try and determine the number of solutions to $f$ by doing binary search on the number of solutions, and so boost the power of \textbf{UnitaryZXSampling} to \sharpP. However, the problem with this is that the resulting diagrams are not proportional to a unitary most of the time. There might be some way around this issue, so that \textbf{UnitaryZXSampling} is still \sharpP-hard: we leave this as an open problem.
\end{remark}

\begin{remark}
Note that if we were to consider a version of \textbf{UnitaryZXSampling}, without the promise of unitarity, such an oracle would be as powerful as \PostBQP, since we can represent any ZX-diagram as a post-selected quantum circuit.
In our case, the power again comes not so much from postselection, as being able to take advantage of the versatility of ZX-diagrams to gain access, in some way, to extract very precise information regarding a \sharpP\ problem. 
\end{remark}

\section{Upper bounding the complexity of \CE}\label{sec:upperbound}

Given that \CE is \sharpP-hard, one might ask whether or not the problem is \sharpP-complete (or more precisely: $\mathbf{FP}^\sharpP$-complete), in the sense that a Turing machine with access to a \sharpP oracle would be able to solve it, for some given polynomial upper bound on circuit size and some given gate-set (perhaps with suitable restrictions), in polynomial time.
We have not managed to prove such a completeness result. We will however present the following upper bounds on decision problem versions of circuit extraction, relying on techniques from~\cite{Adleman-DeMarrais-Huang1997} (which we describe in Appendix~\ref{apx:countingComplexityQm}).

First, consider the following decision problem: given a ZX-diagram, and a circuit, determine whether the circuit implements a unitary which is proportional to that represented by the ZX-diagram (whether by a factor of $\mathrm{e}^{i\theta}$ for some angle $\theta$, or a more general complex number).
The complement of this problem is in $\textbf{NP}^\sharpP$.
To sketch why this is, consider a circuit $C$ representing a unitary $U$, and a ZX diagram $D$ representing an operator $V$.
If $a_0$ and $a_1$ are two non-zero coefficients from $U$, and $b_0$ and $b_1$ are the corresponding (non-zero) pair of coefficients from the matrix $U$ represented by $C$, then $U \propto V$ only if either $a_0/b_0 = a_1/b_1$ for all possible such pairs.
(We compare pairs instead of single coefficients, to deal with the fact that they might only represent the same matrix up to some non-zero scalar.)
We also require that for any coefficient $a$ in $U$ which is zero, the corresponding coefficient $b$ of $V$ is also zero.
Taken together, this implies that for all corresponding pairs of coefficients of $U$ and $V$, $a_0 b_1 = a_1 b_0$.
A \sharpP oracle allows one to calculate coefficients%
	\footnote{%
		In this case, it is not necessary to compute complete information about $a_0$, $a_1$, $b_0$, and $b_1$: it suffices to compute information about individual components of the products $a_0 b_1$ and $a_1 b_0$, where these are regarded as vectors over $\mathbb Q$.
		(See the closing remarks of Appendix~\ref{apx:countingComplexityQm}.)
	} 
of ZX-diagrams and circuits: if $U \not\propto V$, an NTM with access to a \sharpP oracle can non-deterministically find a witness that these two operators are not in fact proportional to one another.
Thus, determining whether a circuit does \emph{not} represent a unitary which is denoted (up to scalar factors) by a ZX-diagram, is in $\NP^\sharpP$.

The above result has a simple corollary: the property of a ZX-diagram being proportional to a unitary, itself belongs to the complement of $\NP^\sharpP$.
We may see this by the fact that a ZX-diagram denoting an operator $V$, which is proportional to a unitary, satisfies $VV^\dagger \propto I$.
We may represent $VV^\dagger$ by composing the diagram $D$ with its adjoint (which is the left-to-right mirror image of $D$, with all phase angles negated).
This composite diagram may easily be computed, at which point we may ask whether the operator it represents is distinct from the identity (even modulo non-zero scalar factors).
As we note above, this problem is in $\NP^\sharpP$.

Finally, using these ideas, we may consider the decision problem of determining, for a ZX-diagram $D$ which denotes an operator $V$ proportional to a unitary,  and given some (approximately universal) gate set $\mathcal G$ and polynomial length bound $N$, whether \emph{there exists} a circuit of at most $N$ gates over $\mathcal G$ which implements $V$.
This problem is in $\NP^{\NP^\sharpP}$: for an NTM with access to an $\NP^\sharpP$ oracle, it suffices to make a nondeterministic guess at a circuit of length $N$ (where each gate may be the identity operator, or some gate $G \in \mathcal G$ acting on a non-deterministically chosen set of qubits) and then query the oracle to determine whether the circuit approximately realises $V$.
A deterministic Turing machine, with access to an oracle for this problem, could then solve \ACE\ in polynomial time using standard techniques, using the oracle to facilitate a search for a circuit to realise $D$.

These observations represent the most straightforward approach to determining an upper bound for the circuit extraction problem, and seem to place it at a level of complexity significantly higher than $\P^\sharpP$.
If we conceive of $\sharpP$ as broadly representing the complexity of evaluating a tensor network, a superficial analogy between \CE\ and boolean circuit minimisation~\cite{garey-johnson-1979,Buchfuhrer-2011} would seem to suggest that \CE\ is likely to be hard for some complexity class higher than $\P^\sharpP$ (barring some collapse of complexity classes).

\section{Conclusion}\label{sec:conclusion}

In this paper we studied the problem of extracting a quantum circuit description from a unitary ZX-diagram. We've shown that this problem is \sharpP-hard by reducing \numberSAT to an application of circuit extraction. We've also studied some variations where we allow auxiliary qubits, classical control, and/or approximate synthesis of the desired unitary, and have shown that these problems are also \sharpP-hard. In addition, we studied the hardness of a machine that takes in a unitary ZX-diagram and outputs measurement samples from that ZX-diagram, and have shown that such a machine allows one to probabilistically solve \NP-hard problems.

A conclusion to be drawn from our results is that if we want some efficient procedure to transform a unitary ZX-diagram into a quantum circuit, then we will have to have some additional information about the structure of the ZX-diagram.
In the known procedures for efficient circuit extraction~\cite{Backens2020extraction,Simmons2021Measurement,deBeaudrap2020Paulifusion}, this additional information takes the form of a kind of `flow' on the diagram that prevents parts of the diagram from becoming too unwieldy.
An immediate question then is if there are other types of, more general, promises on the structure of the diagram which then allow you to extract a circuit from it.

Aaronson showed that sampling from a post-selected quantum circuit is hard~\cite{aaronson2005quantum}. Our results imply that some other tasks surrounding \emph{unitary} post-selected circuits (that is, circuits which perform a unitary transformation conditioned on some post-selection) are hard. 
However, this hardness seems to stem not from the post-selection itself, as the post-selections can be simulated with high probability in our case.
Rather, the hardness seems to stem from a hypothetical ability to find an equivalent, deterministic way to realise the same operation --- which implies an ability to extract difficult-to-access information about the input diagram.

A question related to circuit extraction from ZX-diagrams is circuit extraction from deterministic measurement patterns (in for instance the one-way model or lattice surgery). When we have a deterministic measurement pattern, we can represent each branch of the computation by a ZX-diagram denoting a unitary.
Our hardness proof does however not immediately translate to this setting, as it might be that the fact that all of these ZX-diagrams are branches of the same measurement pattern forces some kind of structure on the diagrams that might make it easier to rewrite them into circuits. The diagrams we used to show hardness of circuit extraction are as far as we are aware not representable as branches of some deterministic measurement pattern, so that we can't use the same proof.
We leave it for future work to determine the hardness of extracting unitary circuits from deterministic measurement patterns.

\acknowledgements{JvdW is supported by an NWO Rubicon personal fellowship.}

\bibliography{main}

\appendix
\section{Quantum circuits as a computational model}
\label{apx:circuits}

In order to describe a model of computation with bounded computational power, one usually imposes further constraints on `quantum circuits', as follows.

As with the study of boolean circuits as a model of computation, one often considers a quantum circuit to be described by some polynomial-time computable procedure (a sort of `effective blueprint'), which for a given $n \ge 0$ requires time $\mathrm{poly}(\log n)$ to produce a circuit taking inputs of size $n$.
While this constraint is not essential when considering a single circuit on its own (the description of the circuit itself is a finite specification), this constraint prevents us from considering what might otherwise seem like `quantum algorithms' for uncomputable problems (in the same way that one must for boolean circuits).

Furthermore, to prevent unbounded computational power from being hidden elsewhere in the description of a quantum circuit, one often imposes constraints on the gates and measurements allowed in a circuit.
One common convention is to require that the unitary gates be drawn from a finite set of unitary operators, and to require that all measurements are of single qubits, with projectors $\{ \ketbra{\texttt{0}}{\texttt{0}}, \ketbra{\texttt{1}}{\texttt{1}} \}$.
This is frequently relaxed, to permit arbitrary single-qubit operations, or Z-rotations and X-rotations by arbitrary angles, and to allow one or two particular two-qubit gates, such as CNOT or CZ, and single-qubit measurements in an arbitrary basis; one then requires that the parameters of any such gates or measurements be efficiently computable.

\section{Counting complexity upper bounds on quantum simulation}
\label{apx:countingComplexityQm}

In order to consider how we may obtain containments for quantum computing using counting complexity, we give an outline of the known techniques for relating quantum computation to counting complexity, and the (mild) technical constraints on coefficients of gates and of ZX-diagrams which this involves.

We are frequently interested in unitary circuits whose gates have coefficients which are irrational.
However, in practise, the (finite) gate-sets of most interest involve only algebraic coefficients: that is to say, complex numbers $z$ (\emph{e.g.},~$\smash{\tfrac{1}{\sqrt 2}}$ and $\mathrm e^{2\pi i/k}$ for integers $k>0$) which satisfy $q(z) = 0$, for some polynomial $q(x) = a_d x^d + \cdots + a_1 x + a_0$ with integer coefficients.
As such coefficients are dense in the complex numbers, and in particular are also dense in the subset of the complex numbers which have unit modulus, coefficients of this sort are appropriate to (approximately) represent gates which we may normally think of as being drawn from continuous, single-parameter families of gates such as $Z_\theta$ and $X_\theta$.
This motivates a restriction to gates, whose coefficients are represented by algebraic numbers.

We refine this constraint as follows.
By virtue of a circuit having only finitely many gates $N \ge 0$, the coefficients of the gates can in principle be drawn from a common algebraic number field $K = \mathbb Q[\omega_1, \omega_2, \cdots, \omega_m]$, where each $\omega_k$ is a root of some (irreducible) polynomial $q_j(x)$ as above and where $m \le N$.
(For a finite gate-set, $m$ will in fact be a constant independent of $N$.)
We adopt the requirement on the representation of a circuit, that the coefficients for its gates be explicitly represented in a form
\begin{equation*}
	 \frac{1}{M}
	 \;\;\sum_{\mathclap{p_1, \ldots\!\!\;,\!\: p_m}}\;\;
	 	 a_{p_1, \ldots\!\!\;,\, p_m} \omega_1^{\,p_1} \cdots \omega_m^{\,p_m}
\end{equation*}
for some constant $M$ and for some choice of integer coefficients $a_{p_1,\ldots\!\!\;,\,p_m}$, where $0 \le p_j < \deg(q_j)$ for each $j$.
(Following Ref.~\cite{Adleman-DeMarrais-Huang1997}, for a finite gate-set, this does not limit the computational power of unitary circuits either for exact or for bounded-error quantum computation.)
A product of $N$ such coefficients, may be represented in a similar form, but with a leading factor of $1/M^N$ rather than $1/M$; sums of such products may be represented in the same way by collecting terms together.
For two numbers represented in this form, we may put them into normal form by comparing each of the terms $\omega_1^{\,p_1} \cdots \omega_m^{\,p_m}$ (for some fixed sequence of powers $p_1,\ldots,p_m$), and testing equality of each integer coefficient.
The significance of this, is that a circuit with $N$ gates, where the individual gates satisfy such conditions, represent unitary transformations whose coefficients take this form.
This provides us with a way in which the coefficients of operators given as unitary circuits, may be represented in such a way that we can efficiently perform equality comparisons \emph{in principle};%
	\footnote{%
		It is in principle possible to extend this representation in some ways, to allow equality comparisons between coefficients which are not algebraic; but this is not practically important to us.
	}
the question is then how we bound the complexity of doing so.

Using the techniques of Ref.~\cite{Adleman-DeMarrais-Huang1997}, we may represent the coefficients of a unitary operator indirectly, using the number of accepting paths of nondeterministic Turing machines with various strings written on the tape to represent the integers $a_{p_1,\cdots\!\!\;,\,p_m}$ of an algebraic number as above.%
	\footnote{%
		To be more precise: we may represent such an integer $a_{p_1,\cdots\!\!\;,\,p_m}$ as a difference $a^{(+)}_{p_1,\cdots\!\!\;,\,p_m} - a^{(-)}_{p_1,\cdots\!\!\;,\,p_m}$, where each $a^{(\pm)}_{p_1,\cdots\!\!\;,\,p_m}$ is itself the number of accepting branches on the nondeterministic Turing machine, with slightly different contents of its tape.$\big.$}
Then, with access to a \sharpP oracle, a deterministic Turing machine would be able to evaluate any one coefficient $a_{p_1,\cdots\!\!\;,\,p_m}$.
Furthermore: the techniques of Ref.~\cite{Adleman-DeMarrais-Huang1997} are not specialised in any particular way to unitary matrices, or indeed to square matrices, and can be applied to a matrix of any shape (or indeed, any tensor network) with algebraic coefficients.
In particular, we may use these techniques to similarly represent the coefficients of linear operators which are represented by ZX-diagrams --- assuming again that the coefficients arising from each generator is algebraic.
The coefficients of the operators described in Eq.~\eqref{eqn:spiders} are either $0$, $1$, or of the form $\mathrm e^{i\theta}$ or $2^{-k/2} (1 \pm \mathrm{e}^{i\theta})$ for various phase angles $\theta$ and integers $k$.
For these to all be algebraic, it suffices that the phase angles $\theta$ of the ZX-diagrams are rational multiples of $\pi$ (though other values of $\theta$ would also satisfy this constraint).%
	\footnote{%
		Technically, for any rational multiple $a\pi/d$ described in the diagram, we also require that a prime factorisation of $d$ is provided.
		(This is an implicit requirement for the representation described above of elements of a number field $K$.)
		For the purposes of practical numerical computation, this can be fulfilled by taking $d = 2^N$ for a sufficiently large value of $N$.
		Alternatively, if one were to use such a representation within a model of computation with access to a \sharpP oracle, one might simply factorise $d$ by making use of the \sharpP oracle to facilitate a binary search for factors of $d$.
	}
We may also consider the coefficients which arise from multiplying the matrix represented by a ZX-diagram, by the matrix represented by a unitary circuit; this requires only that we represent the coefficients of each in a consistent way in terms of a common algebraic number field $K$.
From representations of the gates and of the coefficients of the ZX-generators, this would not be difficult to compute.

The above suffices to describe, in outline, how we may describe the coefficients of a linear operator $U$ --- described by a unitary circuit, a ZX-diagram, or compositions of these representations --- by counting the number of accepting branches of non-deterministic Turing machines.
In particular, for any one coefficient $M^{-N} \sum_{p_1,\cdots\!\!\:,\,p_m} a_{p_1,\ldots\!\!\;,\,p_m} \omega_1^{\,p_1} \cdots \omega_m^{\,p_m}$, a deterministic Turing machine with a \sharpP oracle can evaluate any particular integer $a_{p_1,\ldots\!\!\;,\,p_m}$.
If the degree%
	\footnote{%
		A number field $K$ such as we have considered, can be interpreted as a vector space over the rational numbers $\mathbb Q$, with a `basis' described by the monomials $\omega_1^{\,p_1} \cdots \omega_m^{\,p_m}$ for various integers $0 \le p_j < \deg(q_j)$.
		The \emph{degree} $D$ is then the dimension of $K$ as a vector space over $\mathbb Q$, and may be computed as the product $D = \deg(q_1) \cdots \deg(q_m)$ of the degrees of the irreducible polynomials $q_j$.
	}
of the field extension $D = [K:\mathbb Q]$ is `small' (\emph{i.e.}, a constant, or more generally bounded by a polynomial in the size of the description of $U$ as a composition of simpler operators), we may simply query each such integer to obtain complete information of a single coefficient of $U$.
(If the degree is not `small' in this sense, one might consider subtler strategies to obtain information about them; absence any particular promises or structure, complete information about the matrix coefficients would likely be inaccessible in polynomial time, even with recourse to a \sharpP oracle.)

\end{document}